\newcommand{\pparagraph}[1]{{\bf{#1}.}} \renewcommand{\epsilon}{\varepsilon}
\newcommand{\tepsilon}{\tilde{\varepsilon}}
 \newcommand{\HJ}{\texttt{HJ}}
\newcommand{\lemref}[1]{Lemma~\ref{lem:#1}}
\newcommand{\thmref}[1]{Theorem~\ref{thm:#1}}
\newcommand{\secref}[1]{Section~\ref{sec:#1}}
\newcommand{\footnotenonumber}[1]{{\def\thempfn{}\footnotetext{\footnotesize #1}}}
\renewcommand\bibsection%
\newif\ifnotes 
\newcommand{\thang}[1]{\textcolor{blue}{{\footnotesize
#1}}\marginpar{\raggedright\tiny \textcolor{blue}{THANG}}}
\newcommand{\shikha}[1]{\textcolor{red}{{\footnotesize
#1}}\marginpar{\raggedright\tiny \textcolor{red}{SHIKHA}}}
\else \newcommand{\shikha}[1]{} \newcommand{\thang}[1]{} \fi
\definecolor{blueLink}{rgb}{0,0.2,0.8}
\newenvironment{proofof}[1]
{\par \vspace{1.5ex} \noindent \textit{Proof of #1.}}
{\par}
\newcommand{\lastcorrections}%
{{
 \begin{sloppypar}
    \baselineskip -0.2in
    \tiny\bf\noindent
last corrections:\\
\end{sloppypar}
}}
\newcommand{\margincomment}[1]%
    {{%
      \marginpar{{\tiny\begin{minipage}{0.5in}
                       \begin{flushleft}
                          {#1}
                       \end{flushleft}
                       \end{minipage}
                }}
    }}
\newcommand{\ignore}[1]{}
\newenvironment{bigeqn*}{\large\begin{eqnarray*}}{\end{eqnarray*}}
\newcommand{\OPT}{\textsc{Opt}}
\newcommand{\lref}[2][]{\hyperref[#2]{#1~\ref*{#2}}}
\title{Approximating $k$-Forest with Resource Augmentation: A Primal-Dual Approach
}
\author{
Eric Angel\inst{1} \and Nguyen Kim Thang\inst{1} \and Shikha Singh\inst{2}
}
\institute{
IBISC, University d'Evry Val d'Essonne, France. \\
\email{\{angel, thang\}@ibisc.univ-evry.fr}.
\and
Stony Brook University, Stony Brook, NY, USA. \\
\email{shiksingh@cs.stonybrook.edu}
}
\begin{document}

\maketitle
\footnotenonumber{
This research was supported by the ANR project  \small{OATA n\textsuperscript{o}ANR-15-CE40-0015-01} 
and the Chateaubriand Fellowship of the Office for Science \& 
Technology of the Embassy of France in the United States.
}

\pagestyle{plain}

\begin{abstract}
In this paper, we study the $k$-forest problem in the model of resource
augmentation.  In the $k$-forest problem, given an edge-weighted graph
$G(V,E)$, a parameter $k$, and a set of $m$ demand pairs $\subseteq V \times V$,
the objective is to construct a minimum-cost subgraph that connects at least
$k$ demands.  The problem is hard to approximate---the best-known approximation
ratio is $O(\min\{\sqrt{n}, \sqrt{k}\})$.  Furthermore, $k$-forest is as hard
to approximate as the notoriously-hard densest $k$-subgraph problem. 

While the $k$-forest problem is hard to approximate in the worst-case, we show
that  with the use of resource augmentation, we can efficiently approximate it
up to a constant factor.

First, we restate the problem in terms of the number of demands that are {\em
not} connected.  In particular, the objective of the $k$-forest problem can be
viewed as to remove at most $m-k$ demands and find a minimum-cost subgraph that
connects the remaining demands. We use this perspective of the problem to
explain the performance of our algorithm (in terms of the augmentation) in a
more intuitive way.

Specifically, we present a polynomial-time algorithm for the $k$-forest problem
that, for every $\epsilon>0$, removes at most $m-k$ demands and has cost no
more than $O(1/\epsilon^{2})$ times the cost of an optimal algorithm that
removes at most $(1-\epsilon)(m-k)$ demands.  
\end{abstract}

\section{Introduction}\label{sec:intro}
\vspace{-2pt}
In the worst-case paradigm, algorithms for NP-hard problems are typically
characterized by their \emph{approximation ratio}, defined as the ratio between
the worst-case cost of the algorithm and the cost of an all-powerful optimal
algorithm.  Many computationally-hard problems admit efficient
worst-case approximations \cite{johnson1974approximation,
klein2010approximation, 
williamson2011design, vazirani2013approximation}.  However, there are several 
fundamental problems, such as $k$-densest
subgraph~\cite{FeigePeleg01:The-dense-k-subgraph,
BhaskaraCharikar10:Detecting-high}, 
set cover~\cite{lund1994hardness, feige1996threshold}, graph
coloring~\cite{blum1994new, wigderson1983improving, blum1997o}, 
etc., for which no algorithm with a \emph{reasonable}
approximation guarantee is known.  

Many problems that are hard in the worst-case paradigm admit simple and fast
heuristics in practice.  Illustrative examples include clustering problems (e.g. $k$-median, $k$-means and correlation clustering) and SAT
problems---simple algorithms and solvers for these NP-hard problems routinely
find meaningful clusters~\cite{DanielyLinial12:Clustering-is-difficult} and
satisfiable solutions~\cite{OhrimenkoStuckey09:Propagation-via-lazy} on
practical instances respectively.  A major direction in algorithmic research is
to explain the gap between the observed practical performance and the provable
worst-case guarantee of these algorithms.  Previous work has looked at various
approaches to analyze algorithms that rules out pathological
worst-cases~\cite{koutsoupias2000beyond,
borodin1995competitive,young2000line,emek2009online}.  One such widely-used
approach, especially in the areas of online scheduling and
matching~\cite{chung2008online,PhillipsStein02:Optimal-time-critical,KalyanasundaramPruhs00:Speed-is-as-powerful,kalyanasundaram2000online},
is the model of {\em resource augmentation}.

In the resource-augmentation model, an algorithm is given some additional power
and its performance is compared against that of an optimal algorithm without
the additional power. Resource augmentation has been studied in various guises
such as speed augmentation and machine augmentation (see~\secref{related}
for details). 
Recently, \citet{LucarelliThang16:Online-Non-preemptive} unified
the different notions of resource augmentation under a \emph{generalized
resource-augmentation model} that is based on LP duality.  Roughly speaking, in
the generalized resource-augmentation model, the performance of an algorithm is
measured by the ratio between its worst-case objective value over the set of
feasible solutions $\mathcal{P}$ and the optimal value which is constrained
over a set $\mathcal{Q}$ that is a \emph{strict} subset of $\mathcal{P}$.  In
other words, in the unified model, the algorithm is allowed to be optimized
over relaxed constraints while the adversary (optimum) has tighter constraints.

Duality-based techniques have proved to be powerful tools in the area of online
scheduling with resource augmentation.  Starting with the seminal work
of~\cite{AnandGarg12:Resource-augmentation}, many competitive algorithms have
been designed for online scheduling
problems~\cite{
DevanurHuang14:Primal-Dual,
ImKulkarni14:Competitive-Algorithms,
ImKulkarni14:SELFISHMIGRATE:-A-Scalable,
ImKulkarni15:Competitive-Flow,
Thang13:Lagrangian-Duality,
angelopoulos2015primal,
LucarelliThang16:Online-Non-preemptive}.
%
Interestingly, the principle ideas behind the duality-based approach in the
resource-augmentation setting are general and can be applied to other
(non-scheduling, offline) optimization problems as well. 

In this paper, we
initiate the use of duality to analyze approximation algorithms with resource
augmentation in the context of general optimization problems. We exemplify this
approach by focusing on a problem that has no reasonable approximation in the
worst-case paradigm---the {\em $k$-forest
problem}~\cite{HajiaghayiJain06:The-prize-collecting-generalized}.  

\pparagraph{The $k$-Forest Problem} In the $k$-forest problem, given an edge-weighted graph $G(V,E)$, a parameter $k$ and
a set of $m$ demand pairs $\subseteq V \times V$, we need to find a
minimum-cost subgraph that connects at least $k$ demand pairs.

The $k$-forest problem is a generalization of the classic $k$-MST (minimum
spanning tree) and the $k$-Steiner tree (with a common source) problems, both of
which admit constant factor approximations. In particular, $k$-MST and
$k$-Steiner tree can be approximated up to a factors of $2$ and $4$
respectively~\cite{chudak2001approximate,garg19963}.  On the other hand, the
$k$-forest problem has resisted similar attempts---the best-known approximation
guarantee is $O(\min\{\sqrt{n},
\sqrt{k}\})$~\cite{GuptaHajiaghayi10:Dial-a-ride-from-k-forest}.

\citet{HajiaghayiJain06:The-prize-collecting-generalized} show that the
$k$-forest problem is roughly as hard as the celebrated {\em densest
$k$-subgraph problem}. Given a graph $G$ and a parameter $k$, the densest
$k$-subgraph problem seeks to find a set of $k$ vertices which induce the
maximum number of edges. 
The densest $k$-subgraph problem has been studied extensively in the
literature~\cite{FeigePeleg01:The-dense-k-subgraph, Khot06:Ruling-out-PTAS,
BhaskaraCharikar10:Detecting-high, AsahiroIwama00:greedy-densest-k,
SrivastavWolf98:Finding-dense-subgraph-SDP,
FeigeLangberg01:Maximization-problem-graph-partitioning,
BirnbaumGoldman09:greedy-densest-k-subgraph}
and is regarded to be a hard problem.
\citet{HajiaghayiJain06:The-prize-collecting-generalized} show that if there
is a polynomial time $r$-approximation for the $k$-forest problem then there
exists a polynomial time $2r^{2}$-approximation algorithm for the densest
$k$-subgraph problem. The best known approximation guarantee for the densest
$k$-subgraph problem is $O(n^{1/4+\epsilon})$ 
\cite{BhaskaraCharikar10:Detecting-high}. As pointed out by Hajiaghayi and
Jain~\cite{HajiaghayiJain06:The-prize-collecting-generalized}, an approximation
ratio better than  $O(n^{1/8})$ for the $k$-forest problem (which implies an
approximation ratio better than $O(n^{1/4})$ for the densest $k$-subgraph problem)
would require significantly new insights and techniques.

\subsection{Our Approach and Contributions} 
\vspace{-2pt}
We give the first polynomial-time constant-factor algorithm for the $k$-forest problem in the resource-augmentation model.

Our algorithm is based
on the primal-dual algorithm
by~\citet{HajiaghayiJain06:The-prize-collecting-generalized} for a
closely-related problem, the {\em prize collecting generalized Steiner tree}
(PCGST) problem.  As noted
by~\citet{HajiaghayiJain06:The-prize-collecting-generalized}, the $k$-forest
problem is a Lagrangian relaxation of the PCGST problem.
The authors give a
$3$-approximation algorithm for the PCGST problem. However, their algorithm is
not {\em Lagrangian-multiplier preserving}~\cite{williamson2011design},
which makes it difficult to derive a constant-factor approximation for 
the $k$-forest problem. In this paper, we overcome the challenge posed by the
non-Lagrangian-multiplier-preserving nature of the
primal-dual algorithm by~{\citet{HajiaghayiJain06:The-prize-collecting-generalized}, to obtain
a constant-factor approximation for the $k$-forest problem, by using resource augmentation.

The primal-dual approach is particularly well-suited to analyze algorithms with
resource augmentation. In particular, the
resource augmentation setting can be viewed as a game between an algorithm and
the optimal (or the adversary) where the adversary is subject to tighter
constraints. To apply this notion to the $k$-forest problem, we
need a constraint to play this game between the algorithm and the adversary.  
A natural approach is to choose the number of demands connected as the comparative
constraint. That is, the algorithm chooses to connect at least $k$ ``cheap'' demands
out of the total $m$ demands while the adversary's requirement is higher---to connect 
slightly more than $k$ demands.
An alternate approach is to constrain the number of demands that each algorithm
is allowed to ignore or remove, that is, the algorithm can remove up to $m-k$ ``costly'' demands
while the adversary can remove slightly fewer demands.
Note that with respect to exact and approximate solutions (without any resource augmentation), 
both approaches are equivalent.

We were able to utilize the framework of PCGST~\cite{HajiaghayiJain06:The-prize-collecting-generalized}
and obtain our result by choosing the number of demands {\em that can be removed} as the constraint to be 
augmented.
In particular,
our algorithm for the $k$-forest problem can remove
up to $m-k$ demands whereas the adversary can only remove up to $\lfloor
(1-\epsilon)(m-k) \rfloor$ demands. This tighter cardinality
constraint allows the dual to ``raise'' an additional amount (depending on
$\epsilon$) to ``pay'' for the primal cost.  We exploit this property to bound
the cost of the algorithm's output and that of a dual feasible solution to
derive the approximation ratio. We show the following.

\begin{theorem}\label{thm:k-forest} 
There exists a polynomial-time algorithm for the $k$-forest problem that,
for any $\epsilon > 0$, removes at most
$(m - k)$ connection demands and outputs a subgraph with cost at most
$O(1/\epsilon^{2})$ times the cost of the subgraph output by the optimal algorithm that removes at most
$\lfloor(1-\epsilon)(m - k)\rfloor$ demands.  
\end{theorem}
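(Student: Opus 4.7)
The plan is to formulate $k$-forest as an integer program with edge variables $x_e$ and exclusion variables $z_i$, subject to the cut inequalities $\sum_{e \in \delta(S)} x_e + z_i \geq 1$ for every $S$ separating the $i$-th demand pair and the cardinality inequality $\sum_i z_i \leq m-k$, and relax it to an LP. Its dual has cut variables $y_{i,S}$ and a single variable $\lambda$ corresponding to the cardinality constraint, subject to $\sum_{i,\,S \ni e} y_{i,S} \leq c_e$ per edge and $\sum_S y_{i,S} \leq \lambda$ per demand, with objective $\sum_{i,S} y_{i,S} - (m-k)\lambda$. The key observation is that once $\lambda$ is fixed, this is exactly the dual of a prize-collecting generalized Steiner tree (PCGST) instance in which every demand carries the uniform penalty $\lambda$, which lets us piggyback on Hajiaghayi and Jain's primal-dual $3$-approximation for PCGST.

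Concretely, the algorithm picks a value of $\lambda$ and runs Hajiaghayi--Jain moat growing: active moats grow uniformly, an edge enters the forest when its packing inequality becomes tight, a demand is deactivated as soon as $\sum_S y_{i,S}$ reaches $\lambda$, and a moat is deactivated once all demands inside it are deactivated; a final reverse-delete phase prunes superfluous edges. The run produces a forest $F$, a set $R$ of removed demands, and a feasible dual $(y,\lambda)$ satisfying
\begin{equation*}
c(F) + |R|\,\lambda \;\leq\; 3 \sum_{i,S} y_{i,S}.
\end{equation*}
Resource augmentation enters at this point: the same $(y,\lambda)$ is also feasible for the dual of the tighter LP in which $m-k$ is replaced by $\lfloor (1-\epsilon)(m-k) \rfloor$, so weak duality against the adversary gives $\sum_{i,S} y_{i,S} \leq \OPT_\epsilon + (1-\epsilon)(m-k)\,\lambda$. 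Substituting yields
\begin{equation*}
c(F) \;\leq\; 3\,\OPT_\epsilon + \bigl(3(1-\epsilon)(m-k) - |R|\bigr)\,\lambda,
\end{equation*}
which isolates $(m-k)\lambda$ as the only quantity left to bound. This is precisely the term that a Lagrangian-multiplier-preserving PCGST approximation would cancel for free; since Hajiaghayi--Jain is not such an algorithm, the resource augmentation is what supplies the slack needed to control it.

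To exploit that slack, I would tune $\lambda$ by binary search to lie just above the threshold at which $|R_\lambda|$, a monotone non-increasing step function of $\lambda$, drops from exceeding $m-k$ to at most $m-k$. At such a $\lambda$, a slightly smaller $\lambda'$ would produce a run removing strictly more than $m-k$ demands, and the $\Theta(\epsilon(m-k))$ of these that the adversary is nevertheless forced to connect (since it may exclude only $\lfloor(1-\epsilon)(m-k)\rfloor$ of them) have their moats already grown to the saturation level $\lambda'$; the edge-packing constraints imposed by these moats charge a cost of order $\epsilon(m-k)\lambda$ against $\OPT_\epsilon$. Turning this into an inequality gives a bound on $(m-k)\lambda$ of the form $O(\OPT_\epsilon / \epsilon)$, which when substituted back into the displayed inequality above produces the promised $O(1/\epsilon^2)$ approximation after accounting for the residual $(2-3\epsilon)(m-k)$ coefficient in front of $\lambda$.

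The main obstacle is turning this shadow-price intuition into a rigorous inequality. The primal-dual algorithm does not literally solve the LP, and $|R_\lambda|$ can jump by more than one as $\lambda$ crosses the threshold, so LP sensitivity analysis does not apply directly. I would handle this via the standard two-parameter Lagrangian trick familiar from $k$-MST: run the algorithm at two nearby values $\lambda_1 < \lambda_2$ whose removal counts straddle $m-k$, and combine the ``cheap but infeasible'' $\lambda_1$-forest with the ``feasible but expensive'' $\lambda_2$-forest into a single forest that removes at most $m-k$ demands. Adapting this combination step from the single-vertex connectivity setting of $k$-MST to the demand-pair setting of $k$-forest, and showing that the merge overhead stays within $O(1/\epsilon^2)\,\OPT_\epsilon$, is where I expect the bulk of the technical work to lie.
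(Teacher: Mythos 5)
You set up the same machinery as the paper up to the decisive step: uniform penalty $\lambda$, the \HJ\ subroutine, a binary search producing $\lambda_1<\lambda_2$ whose removal counts straddle the target, and weak duality against the dual of the LP with the tighter cardinality bound. But the step that actually finishes the proof is missing. Your first route, the shadow-price bound $(m-k)\lambda = O(\OPT/\epsilon)$, is, as you concede, not rigorous, and it is not easily repaired: the \HJ\ algorithm grows shared dual variables $y_S$ on sets rather than per-demand variables $y_{i,S}$, its deactivation rule is governed by the family constraint $\sum_{S\in\mathcal{S}} y_S \le \sum_{i:\,\mathcal{S}\odot i}\lambda$, and several removed demands can share the same moats, so the charge ``each of the $\epsilon(m-k)$ demands the adversary must connect pays $\lambda$ toward $\OPT$'' double counts. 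Your fallback route, merging the cheap-but-infeasible $\lambda_1$-forest with the feasible-but-expensive $\lambda_2$-forest as in $k$-MST, is exactly what the absence of the Lagrangian-multiplier-preserving property blocks (the standard merge analyses for $k$-MST/$k$-median rely on LMP), and you leave this, the actual crux, as ``the bulk of the technical work.''

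The paper's key idea, which your proposal lacks, is that no merge is needed because the resource augmentation can be spent on the primal side. Set $r=(1-\epsilon/2)(m-k)$ as the benchmark's removal budget while the algorithm may remove $m-k$ demands. Take the convex combination of the two dual solutions with $\alpha_1 r_1+\alpha_2 r_2=r$, $\alpha_1+\alpha_2=1$; weak duality together with the two \HJ\ lemmas ($r(\lambda)\lambda\le\sum_S y_S(\lambda)$ and $c(F)\le 2\sum_S y_S(\lambda)$) yields $\alpha_1 c(F_1)+\alpha_2 c(F_2)\le O(1/\epsilon)\cdot\OPT$, where one factor $1/\epsilon$ comes from the slack $(1-\epsilon/2)$ in the dual objective (the binary-search precision $c_{\min}/m^2$ controls the error from evaluating both duals at a common $\lambda^*$). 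Then output a \emph{single} forest: if $\alpha_2\ge\epsilon/2$, output $F_2$, which is always feasible and costs at most $1/\alpha_2\le 2/\epsilon$ times the combination; otherwise $\alpha_1>1-\epsilon/2$ forces $r_1\le r/(1-\epsilon/2)=m-k$, so $F_1$ is feasible precisely because the algorithm's budget is $m-k$ rather than $r$, and it costs at most $1/(1-\epsilon/2)$ times the combination. This delivers the $O(1/\epsilon^2)$ guarantee with no forest merging; recognizing that the augmentation makes one of the two straddling solutions directly feasible is the missing idea you would need to complete your argument.
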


\pparagraph{Augmentation Parameter: Demands Removed vs. Demands Connected}
While the approach of connecting at least $k$ demands 
is equivalent to rejecting up to $m-k$ demands 
with respect to exact and approximate solutions (without resource augmentation),
there is a notable distinction between them in the presence of augmentation.
In particular, allowing the adversary to remove up to $(1-\epsilon)(m-k)$ demands (compared to $m-k$ demands 
removed by the algorithm), means we require the adversary to connect at least $k + \epsilon (m-k)$ demands (compared to the $k$ demands connected by 
the algorithm).

In this paper, we provide augmentation in terms of $m-k$, the number of demands that can
be removed, because it leads to a more intuitive
understanding of our algorithm's performance. 
In particular, our algorithm is {\em scalable} in terms of the parameter $m-k$,
that is, it is a constant-factor approximation (depending on $\epsilon$) with a
factor $(1+\epsilon)$ augmentation. 
On the other hand, in terms of the parameter $k$, our algorithm is a constant-factor approximation (depending on $\epsilon$)
with a factor $\left(1+\frac{m-k}{k} \cdot \epsilon\right)$ augmentation, which is arguably not as insightful. 
We leave the question
of obtaining a constant-factor approximation with a better augmentation in
terms of $k$ as an interesting open problem.

\subsection{Additional Related Work}\label{sec:related} 

\pparagraph{$k$-Forest and Variants} The $k$-forest problem generalizes both
$k$-MST and $k$-Steiner tree.  \citet{chudak2001approximate} discuss the
$2$-approximation for $k$-MST~\cite{garg19963} and give a $4$-approximation for
$k$-Steiner tree. 
  
\citet{SegevSegev10:Approximate-k-Steiner} gave a $O(\min\{n^{2/3},
\sqrt{m}\}\log n)$-approximation for the $k$-forest problem, which was improved
by~\citet{GuptaHajiaghayi10:Dial-a-ride-from-k-forest} to a $O(\min{\sqrt{n},
\sqrt{k}})$-approximation. \citet{GuptaHajiaghayi10:Dial-a-ride-from-k-forest}
also reduce a well-studied vehicle-routing problem
in operations research, the {\em Dial-a-Ride}
problem~\cite{CharikarRaghavachari98, 
HaimovichRinnooy85,FredericksonHecht76}
to the $k$-forest problem.  In particular, they show that an
$\alpha$-approximation for $k$-forest implies an $O(\alpha
\log^2n)$-approximation algorithm for the Dial-a-Ride problem.

\pparagraph{Resource Augmentation and Duality}
\citet{KalyanasundaramPruhs00:Speed-is-as-powerful} initiated the study of
resource augmentation with the notion of \emph{speed augmentation}, where an
online scheduling algorithm is compared against an adversary with slower
processing speed.  \citet{PhillipsStein02:Optimal-time-critical} proposed the
\emph{machine augmentation} model in which the algorithm has more machines than
the adversary.  \citet{ChoudhuryDas15:Rejecting-jobs} introduced the
\emph{rejection model} where an online scheduling algorithm is allowed to
discard a small fraction of jobs.  Many natural scheduling algorithms can be
analyzed using these models and these analyses have provided theoretical
evidence behind the practical performance of several scheduling heuristics.
Recently, \citet{LucarelliThang16:Online-Non-preemptive} unified the different notions under
a \emph{generalized resource-augmentation model} using LP duality. 
To the best of our knowledge, such duality-based techniques have not been used in the context of 
approximation algorithms with resource augmentation.

\section{Primal-Dual Algorithm for $k$-Forest} \label{k-forest}

In this section, we present an efficient primal-dual algorithm for the $k$-forest problem in the resource-augmentation model.

In the $k$-forest problem, given an undirected graph $G(V,E)$ with  a
nonnegative cost $c_{e}$ on each edge $e \in E$, a parameter $k$, and $m$
connection demands $\mathcal{J} = \{(s_1, t_1), (s_2, t_2), \ldots,
(s_{m},t_{m})\} \subseteq V \times V$, the objective is to 
construct a minimum-cost subgraph of $G$ which connects at least $k$ demands.
To overcome the non-Lagrangian-multiplier-preserving barrier~\cite{HajiaghayiJain06:The-prize-collecting-generalized} and
to take advantage of resource augmentation, we restate the problem
as follows---given an undirected graph $G(V,E)$ with  a
nonnegative cost $c_{e}$ on each edge $e \in E$, a parameter $k$, and $m$
connection demands $\mathcal{J} = \{(s_1, t_1), (s_2, t_2),\ldots, 
(s_{m},t_{m})\} \subseteq V \times V$, the objective is remove up to $(m-k)$ demands 
and construct a minimum-cost subgraph of $G$ that connects 
the remaining demands.

We use the algorithm by \citet{HajiaghayiJain06:The-prize-collecting-generalized}
for the prize-collecting generalized Steiner tree (PCGST) problem 
and refer to it by the shorthand \HJ. In the prize-collecting generalized Steiner tree (PCGST) 
problem, given an undirected graph $G(V,E)$,  with  a nonnegative cost $c_{e}$ on each edge $e \in E$, 
$m$ connection demands $\mathcal{J} = \{(s_1, t_1), (s_2, t_2), \ldots,
(s_{m},t_{m})\}$ and a nonnegative penalty cost $\pi_i$ for every demand $i \in \mathcal{J}$,
the goal is minimize the cost of buying a set of edges and paying a penalty for the demands
that are not connected by the chosen edges. Without loss of generality, we can 
assume that $\mathcal{J} \subset V \times V$, as the penalty for demands that need
not be connected can be set to zero.

Next, we restate the LP for the PCGST problem in terms of the $k$-forest problem and 
reproduce the relevant lemmas~\cite{HajiaghayiJain06:The-prize-collecting-generalized}. 

\subsection{Hajiaghayi and Jain's LP for $k$-forest}

Fix a constant $0 < \epsilon < 1$. Set $\tepsilon = \epsilon/2$ and set $r = (1-\tepsilon)(m-k)$.
Let $x_e$ be a variable such that $x_e = 1$ if edge $e \in E$ is included in the subgraph solution. Similarly, let $z_{i}$
be a variable such that $z_{i} = 1$ if $s_{i}, t_{i}$ are not connected in the subgraph solution. 
We restate the integer program for the PCGST problem~\cite{HajiaghayiJain06:The-prize-collecting-generalized} 
in terms of the $k$-forest problem in the resource augmentation model as $(\mathcal{P}_{\tepsilon})$. 
\begin{align}
& & \min  \sum_{e \in E} & c_e x_e  	& & (\mathcal{P}_{\tepsilon}) \notag \\
(y_S) & &	\qquad  \sum_{e \in \delta(S)} x_{e} + z_{i} &\ge 1	& &  \forall i, \forall S \subset V: S \odot i \notag\\
(\lambda) & &	 \qquad \sum_{i,j \in V} z_{i}& \leq (1-\tepsilon) r & &  \notag\\
& & x_e, z_{i} &\in \{0,1\} & & \forall e \in E, \forall i \notag 
\end{align}
For a set $S \subset V$, the notation $S \odot i$ stands for $|\{s_{i},t_{i}\} \cap S| =1$. 
For a given non-empty set $S \subset V$,  $\delta(S)$ denotes the set of edges defined by the {\em cut} $S$, that is,
$\delta (S)$ is the set of all edges with exactly one endpoint in $S$. Thus, the first constraint says that for every cut $S \odot i$,
there is at least one edge $e \in \delta(S)$ such that either edge $e$ is included in the solution or demand $i$
is removed.  The second constraint says that the total number of demands removed is no more than 
$(1-\tepsilon)r$. Note that the optimal value of $(\mathcal{P}_{\tepsilon})$ is 
a lower bound on the optimal solution that removes at most $(1-\epsilon)(m-k)$ demands. 
This is because we have slightly relaxed the upper bound of the number of demands removed to be 
$(1-\tepsilon)r = (1-\tepsilon)^2r \geq (1-\epsilon)(m - k)$. 

The dual $(\mathcal{D}_{\tepsilon})$ of the relaxation of $(\mathcal{P}_{\tepsilon})$ follows.
\begin{align*}
\max  \sum_{S \subset V, S \odot i} y_{i,S} & -  (1-\tepsilon)r\lambda 	& & (\mathcal{D}_{\tepsilon}) \notag \\
\sum_{S: e \in \delta(S), S \odot i} y_{i,S}  &\le  c_e	& &  \forall e \in E \notag\\
\sum_{S: S \odot i} y_{i,S} & \le \lambda & &  \forall i \notag\\
y_{i,S} &\ge 0 & & \forall S \subset V: S \odot i \notag 
\end{align*}
Hajiaghayi and Jain~\cite{HajiaghayiJain06:The-prize-collecting-generalized} formulate a new dual $(\mathcal{D}_{\tepsilon}^{ \HJ})$ 
equivalent to  $(\mathcal{D}_{\tepsilon})$ based
on Farkas lemma. This new dual resolves the challenges posed by raising different dual variables associated with the same set of vertices
of the graph in $(\mathcal{D}_{\tepsilon})$. 
We refer the readers to the original paper~\cite{HajiaghayiJain06:The-prize-collecting-generalized}
for a detailed discussion on the transformation and proofs.

Note that $\mathcal{S}$ is a \emph{family} of subsets of $V$ if $\mathcal{S} = \{S_1, S_2, \ldots, S_\ell\}$ 
where $S_{j} \subset V$ for $1 \leq j \leq \ell$. For a family $\mathcal{S}$, if there exists
$S \in \mathcal{S}$ such that $S \odot i$, we denote it by $\mathcal S \odot i$. The new dual $(\mathcal{D}_{\tepsilon}^{\HJ})$
is stated below.
\begin{align*}
\max  \sum_{S \subset V} y_S & - (1-\tepsilon) r\lambda  	& & (\mathcal{D}_{\tepsilon}^{ \HJ}) \notag \\
\sum_{S: e \in \delta(S)} y_{S}  &\le  c_e	& &  \forall e \in E \notag\\
\sum_{S \in \mathcal{S}} y_{S} & \le \sum_{i, \mathcal{S} \odot i}\lambda & &  \forall \mbox{ family $\mathcal{S}$} \notag\\
y_{S} &\ge 0 & & \forall S \subset V \notag 
\end{align*}
We use the \HJ~algorithm (along with 
the construction of dual variables) for the PCGST problem. We set the penalty of every request 
to a fixed constant $\lambda$. We reproduce the relevant lemmas in terms of $k$-forest. See~\cite{HajiaghayiJain06:The-prize-collecting-generalized} for proofs.

For $S \subset V$, let $y_{S}(\lambda)$'s be the dual variables constructed in \HJ~algorithm with penalty cost $\lambda$. 
Let $y(\lambda)$ be the vector consisting of all $y_{S}(\lambda)$'s.

\begin{lemma}[\cite{HajiaghayiJain06:The-prize-collecting-generalized}]\label{lem:penalty}
Let $r(\lambda)$ be the number of demands removed with the penalty cost $\lambda$ by the 
\HJ~algorithm. Then, $r(\lambda) \cdot \lambda \leq \sum_{S} y_{S}(\lambda)$.
\end{lemma}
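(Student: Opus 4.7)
The plan is to trace the continuous moat-growing procedure underlying the \HJ~algorithm and to charge the accumulated dual growth to removed demands. The algorithm is a Goemans--Williamson-style primal-dual process: at every instant of time, certain subsets $S \subset V$ are designated as \emph{active}, and their dual variables $y_S$ are raised in parallel; a demand $i$ is marked \emph{removed} precisely when the cumulative dual charge on cuts separating $s_i$ from $t_i$ reaches the uniform penalty $\lambda$.

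First, I would work with the per-demand dual $(\mathcal{D}_{\tepsilon})$ and observe that by construction of \HJ, the per-demand variables $y_{i,S}(\lambda)$ satisfy $\sum_{S: S \odot i} y_{i,S}(\lambda) = \lambda$ for every removed demand $i$, since tightness of this inequality is exactly the removal criterion (non-removed demands contribute a nonnegative amount). Summing over removed demands gives
\begin{align*}
\sum_{i \text{ removed}} \ \sum_{S: S\odot i} y_{i,S}(\lambda) \;=\; r(\lambda)\cdot \lambda.
\end{align*}
Second, I would invoke the Farkas-based aggregation of \cite{HajiaghayiJain06:The-prize-collecting-generalized} that relates $(\mathcal{D}_{\tepsilon})$ to $(\mathcal{D}^{\HJ}_{\tepsilon})$: the construction is designed so that, at the termination of the moat-growing phase, the aggregated variable $y_S(\lambda)$ dominates the cumulative per-demand contributions on $S$. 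Exchanging the order of summation and chaining this domination with the previous equality yields $\sum_S y_S(\lambda) \geq r(\lambda) \cdot \lambda$, which is the claim.

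The main technical obstacle is the domination step that ties $y_S(\lambda)$ to $\sum_{i: S \odot i} y_{i,S}(\lambda)$ without double-counting the dual growth across distinct removed demands. A naive aggregation $y_S = \sum_i y_{i,S}$ is not feasible in $(\mathcal{D}^{\HJ}_{\tepsilon})$ in general, which is precisely why the authors of \cite{HajiaghayiJain06:The-prize-collecting-generalized} resort to a Farkas-lemma argument to construct the correct $y_S(\lambda)$; I would quote that construction rather than rebuild it from scratch. Once the charging scheme is in hand, the lemma reduces to the bookkeeping assembled in the two paragraphs above.
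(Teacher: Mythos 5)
First, note that the paper does not prove this lemma at all: it is imported verbatim from \citet{HajiaghayiJain06:The-prize-collecting-generalized} (``See \cite{HajiaghayiJain06:The-prize-collecting-generalized} for proofs''), so there is no in-paper argument to compare against; your attempt has to stand on its own as a reconstruction of the \HJ\ analysis.

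As such a reconstruction it has a genuine gap, and the gap is exactly the step you flag as ``the main technical obstacle.'' Your first step mischaracterizes the removal criterion of \HJ: the algorithm does not maintain per-demand variables $y_{i,S}$ and does not remove demand $i$ when $\sum_{S: S\odot i} y_{i,S}$ reaches $\lambda$. It grows the aggregated moats $y_S$ of $(\mathcal{D}_{\tepsilon}^{\HJ})$ directly, and a demand is marked for removal when a \emph{family} constraint $\sum_{S\in\mathcal{S}} y_S \le \sum_{i:\,\mathcal{S}\odot i}\lambda$ becomes tight for a family $\mathcal{S}$ cutting that demand (this is precisely the point of introducing $(\mathcal{D}_{\tepsilon}^{\HJ})$, since the per-demand formulation is what the Farkas transformation is designed to avoid). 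Consequently the identity $\sum_{S:S\odot i} y_{i,S}(\lambda)=\lambda$ for removed demands is not available, and the Farkas-based equivalence you invoke for the ``domination step'' goes the wrong way for your purposes: it decomposes a feasible $y_S$ into per-demand variables satisfying only $\sum_{S:S\odot i} y_{i,S}\le\lambda$ (feasibility in $(\mathcal{D}_{\tepsilon})$), with no tightness at removed demands, so chaining it with your first step does not yield $r(\lambda)\cdot\lambda\le\sum_S y_S(\lambda)$. The actual argument in \cite{HajiaghayiJain06:The-prize-collecting-generalized} bypasses per-demand variables entirely: every removed demand is cut by some tight family arising at a deactivation event, the regions (``pockets'') associated with distinct events are disjoint so each $y_S$ is charged at most once, and summing the tight family constraints gives $r(\lambda)\cdot\lambda\le\sum_{\text{events}}\sum_{S\in\mathcal{S}} y_S(\lambda)\le\sum_S y_S(\lambda)$. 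Your sketch would need to be rebuilt along these lines (or simply cite the lemma, as the paper does) rather than routed through $(\mathcal{D}_{\tepsilon})$.
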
 

\begin{lemma}[\cite{HajiaghayiJain06:The-prize-collecting-generalized}]\label{lem:primal-bound}
Let $F$ be the set of edges in the subgraph solution output by the \HJ~algorithm. 
Then $\sum_{e \in F} c_{e} \leq 2 \sum_{S} y_{S}(\lambda)$.
\end{lemma}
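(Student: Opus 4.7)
The plan is to follow the standard Goemans--Williamson-style moat-growing analysis, adapted to the prize-collecting setting of \HJ. Since every edge in $F$ is tight at the moment it is added by \HJ~(and its dual constraint remains tight thereafter), we have $c_e = \sum_{S : e \in \delta(S)} y_S(\lambda)$ for each $e \in F$. Summing over $F$ and swapping the order of summation yields
\begin{align*}
\sum_{e \in F} c_e \;=\; \sum_{S} y_S(\lambda) \cdot |\delta(S) \cap F|,
\end{align*}
so it suffices to prove $\sum_S y_S(\lambda) \cdot |\delta(S) \cap F| \leq 2 \sum_S y_S(\lambda)$, i.e.\ that on average each moat $S$ with $y_S(\lambda) > 0$ is crossed by at most $2$ edges of $F$.

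Next, I would unfold the right-hand sum via a continuous-time charging. Viewing \HJ~as raising the duals $y_S$ uniformly across all currently-active moats, let $\mathcal{A}_t$ denote the set of active moats at time $t$. Then $\sum_S y_S(\lambda)$ grows at rate $|\mathcal{A}_t|$, while $\sum_S y_S(\lambda)\,|\delta(S)\cap F|$ grows at rate $\sum_{S \in \mathcal{A}_t} |\delta(S) \cap F|$, so the desired inequality reduces to the instantaneous bound
\begin{align*}
\sum_{S \in \mathcal{A}_t} |\delta(S) \cap F| \;\leq\; 2\,|\mathcal{A}_t|, \qquad \text{for every time } t.
\end{align*}

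To establish the instantaneous bound, I would contract each moat at time $t$ (active and inactive alike) into a single vertex, obtaining an auxiliary graph $H_t$ whose vertices carry an active/inactive label. The laminar structure of the moats implies that the image of $F$ in $H_t$ is a forest on the $a+b$ vertices, where $a=|\mathcal{A}_t|$ and $b$ is the number of inactive vertices. Since the total degree in any forest on $a+b$ vertices is at most $2(a+b-1)$, it is enough to show that every inactive vertex of $H_t$ has degree at least $2$: subtracting then gives $\sum_{v \text{ active}} \deg_{H_t}(v) \leq 2(a+b-1) - 2b < 2a$, as required. This leaf-property is exactly the invariant that \HJ's reverse-delete pruning phase enforces, since the pruning would remove the unique incident edge of any inactive leaf of the contracted forest.

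The hardest step is justifying the leaf-property invariant in the prize-collecting setting, where a moat can become inactive for two distinct reasons: either it merges with a neighboring active moat, or the penalty constraint $\sum_{S: S \odot i} y_S(\lambda) = \lambda$ (encoded by the $\lambda$-side of $(\mathcal{D}_{\tepsilon}^{\HJ})$ via Farkas' lemma) is saturated for some demand $i$ inside. I would argue by induction on the reverse-delete pruning order: moats that went inactive through merging automatically have degree at least $2$ in $H_t$ because the edges built during the merge remain in $F$; moats deactivated by penalty saturation either vanish from $F$ entirely (with their demands charged to $r(\lambda)$ as in \lemref{penalty}), or persist only when needed to route another demand, in which case the pruning rule keeps them glued into the forest as internal vertices of degree at least $2$. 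A careful case analysis of the pruning step along these lines closes the proof.
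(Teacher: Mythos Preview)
The paper does not actually prove this lemma: it is stated as a citation from Hajiaghayi and Jain~\cite{HajiaghayiJain06:The-prize-collecting-generalized}, with the explicit remark ``See~\cite{HajiaghayiJain06:The-prize-collecting-generalized} for proofs.'' So there is no in-paper proof to compare against.

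That said, your sketch is the standard Goemans--Williamson moat-growing analysis, which is indeed the route taken in the cited \HJ\ paper, so in spirit you are aligned with the intended proof. One point in your last paragraph is muddled: the phrase ``moats that went inactive through merging automatically have degree at least $2$ in $H_t$ because the edges built during the merge remain in $F$'' conflates two different things. When two moats merge they cease to exist as separate moats; the entities present in $H_t$ are the moats alive at time $t$, and an inactive one is a component that has stopped growing (in the PC setting, because its penalty constraint saturated), not a remnant of a merge. The correct statement is simply that after the reverse-delete pruning, no inactive moat can be a leaf of the contracted forest $H_t$ (otherwise its unique incident edge would have been deleted), so every inactive vertex has degree $0$ or degree $\geq 2$; isolated inactive vertices contribute nothing to either side and can be dropped. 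With that correction your degree-counting argument goes through, and the rest of the sketch is fine.
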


\subsection{Algorithm for $k$-Forest}\label{sec:binary}
Let $\HJ(\lambda)$ denote a call to the primal-dual algorithm of Hajiaghayi and Jain~\cite{HajiaghayiJain06:The-prize-collecting-generalized} 
for the PCGST problem with a penalty cost $\lambda$ for every request. 
For a given value $\lambda$, let $r(\lambda)$ be the number
of demands removed by the algorithm $\HJ(\lambda)$.
Similar to the classic $k$-median algorithm~\cite{jain2001approximation}, 
we do a binary search on the value of $\lambda$, and call the \HJ~as a subroutine each time.  
We describe our algorithm for $k$-forest next and refer to it as algorithm $\mathcal{A}$.

\begin{enumerate}
\item Let $c_{\min} = \min\{c_{e}: e \in E\}$. Initially set $\lambda^{1} \gets 0$ and $\lambda^{2} \gets {\sum_{e \in E} c_{e}}$. 
\item While $(\lambda^{2} - \lambda^{1}) > c_{\min}/m^{2}$, do the following:
\begin{enumerate}
	\item Set $\lambda =  (\lambda^{1} + \lambda^{2})/2$. 
	\item Call $\HJ(\lambda)$ and get $r(\lambda)$ (the number of demands removed).
	\begin{enumerate}
		\item\label{step1} If $r(\lambda) =r $, then output the solution given by $\HJ(\lambda)$.
		\item Otherwise, if $r(\lambda) < (1-\epsilon/2)r$ then update $\lambda^{2} \gets \lambda$;
		\item Otherwise, if $r(\lambda) > r$ then update $\lambda^{1} \gets \lambda$.
	\end{enumerate}
\end{enumerate}
\item\label{step3} 
Let $\alpha_1$ and $\alpha_2$ be such that $\alpha_1 r_1 + \alpha_2 r_2 = r$, $\alpha_1 + \alpha_2 =1$ and $\alpha_1, \alpha_2 \ge 0$.  Specifically, 
\begin{equation}\label{eq:alpha-forest}
\alpha_1 = \frac{r  - r_2}{r_1 - r_2} \mbox{\hspace{5pt} and \hspace{5pt}} \alpha_2
= \frac{r_1 - r_{0}}{r_1 - r_2}
\end{equation}
If $\alpha_{2} \geq \tepsilon$, then return the solution $\HJ(\lambda^{2})$. Else,
return the solution $\HJ(\lambda^{1})$.
\end{enumerate}

Observe that the algorithm $\mathcal{A}$ always terminates: either it encounters a value of $\lambda$ such that 
$r(\lambda) = r$ in Step \ref{step1} or returns a solution depending on the final values of $\lambda^{1}$ and $\lambda^{2}$ in Step~\ref{step3}.

\subsection{Analysis}\label{sec:k-forest-analysis}
Let $\OPT_{u}$ be the cost of an optimal solution that removes at most $u$ demands.
Assume that $c_{\min} \le \OPT_{(1-\tepsilon)r}$, 
because otherwise the optimal solution is to not select any edge $e \in E$.
The algorithm outputs the solution either in Step~\ref{step1} or in Step~\ref{step3}. 
First, consider the case that the solution is output in Step~\ref{step1}.

\begin{lemma}\label{lem:step2-analysis}
Suppose that $\mathcal{A}$ outputs the solution given by $\HJ(\lambda)$ in Step~\ref{step1} for some $\lambda$.
Let $F$ be the set of edges returned by $\HJ(\lambda)$. 
Then, 
$$
\sum_{e \in F} c_{e} \leq  \frac{2}{\tepsilon} \cdot \OPT_{(1-\tepsilon)r}.
$$ 
\end{lemma}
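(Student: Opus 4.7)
The plan is to chain the two lemmas that Hajiaghayi and Jain supply with weak LP duality against $(\mathcal{D}_{\tepsilon}^{\HJ})$. First I would invoke \lemref{primal-bound} to obtain $\sum_{e \in F} c_e \leq 2 \sum_{S} y_S(\lambda)$, reducing the problem to upper-bounding the total dual weight raised by the subroutine $\HJ(\lambda)$.

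Next I would argue that the pair $(y(\lambda), \lambda)$ is feasible for $(\mathcal{D}_{\tepsilon}^{\HJ})$; this is a standard invariant maintained throughout the \HJ~moat-growing procedure (edge packing constraints are never violated, and the family constraints correspond exactly to the $\lambda$-penalty cap). Once feasibility is in hand, weak duality together with the observation (noted in the LP derivation above) that the LP optimum lower-bounds $\OPT_{(1-\tepsilon)r}$ yields
\begin{equation*}
\sum_{S} y_S(\lambda) - (1-\tepsilon) r \lambda \;\leq\; \OPT_{(1-\tepsilon)r}.
\end{equation*}

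The key move is then to eliminate $\lambda$ using \lemref{penalty}. Since we are in Step~\ref{step1}, $r(\lambda) = r$, so \lemref{penalty} gives $r\lambda \leq \sum_S y_S(\lambda)$, i.e.\ $\lambda \leq \sum_S y_S(\lambda)/r$. Substituting this bound into the weak-duality inequality produces
\begin{equation*}
\sum_{S} y_S(\lambda) \;\leq\; \OPT_{(1-\tepsilon)r} + (1-\tepsilon)\sum_{S} y_S(\lambda),
\end{equation*}
which rearranges to $\tepsilon \sum_S y_S(\lambda) \leq \OPT_{(1-\tepsilon)r}$. Combining with the initial reduction via \lemref{primal-bound} yields the claimed $\frac{2}{\tepsilon}\OPT_{(1-\tepsilon)r}$ bound.

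The main obstacle is a conceptual rather than computational one: making sure that the dual solution $(y(\lambda),\lambda)$ produced by \HJ~for the PCGST instance is indeed feasible for the reformulated dual $(\mathcal{D}_{\tepsilon}^{\HJ})$ used here, since the two LPs are only equivalent after the Farkas-based transformation. Once this is recognized (or simply cited from~\cite{HajiaghayiJain06:The-prize-collecting-generalized}), the rest of the argument is the short algebraic manipulation above, where the factor $1/\tepsilon$ arises naturally from the slack between the algorithm's cardinality budget $r$ and the adversary's tighter budget $(1-\tepsilon)r$.
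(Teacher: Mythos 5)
Your proposal is correct and is essentially the paper's own argument: both chain weak duality against $(\mathcal{D}_{\tepsilon}^{\HJ})$ with \lemref{penalty} (using $r(\lambda)=r$ to absorb the $(1-\tepsilon)r\lambda$ term and extract the factor $\tepsilon$) and \lemref{primal-bound}, merely written in the reverse order (you start from $\sum_{e\in F} c_e$, the paper starts from $\OPT_{(1-\tepsilon)r}$). The dual feasibility of $(y(\lambda),\lambda)$ that you flag is likewise handled in the paper by citing Hajiaghayi and Jain, so no gap remains.
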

\begin{proof}
Since the solution is output in Step~\ref{step1}, the number of demands removed is $r(\lambda) = r$. 
By weak duality, the value of $\OPT_{(1-\tepsilon)r}$ is lower bounded by the 
objective cost of $(\mathcal{D}_{\tepsilon}^{ \HJ})$ with dual variables $y(\lambda)$. 
That is, 
\begin{align*}
\OPT_{(1-\tepsilon)r} &\geq \sum_{S \subset V} y_S - (1-\tepsilon) r\lambda\\ 
&\geq  \tepsilon \cdot \sum_{S \subset V} y_S \geq  \frac{\tepsilon}{2} \cdot \sum_{e \in F} c_{e}
\end{align*}
where the last two inequalities follow from~\lemref{penalty} and~\ref{lem:primal-bound} respectively.\qed
\end{proof}

Next, consider the case that the solution is output in Step~\ref{step3}. 
Let $F_1$ and $F_2$ be the sets of edges returned by  $\HJ(\lambda^{1})$ and $\HJ(\lambda^{2})$, respectively. 
Let $r_1$ and $r_{2}$ denote the number of demands removed by 
$\HJ(\lambda^{1})$ and $\HJ(\lambda^{2})$ respectively. Then, we have $\lambda^{2} - \lambda^{1} \leq c_{\min}/m^{2}$.
As $c_{\min} \le \OPT_{(1-\tepsilon)r}$, at the end of the while loop we have 
$\lambda^{2} - \lambda^{1} \leq c_{\min}/m^{2} \le \OPT_{(1-\tepsilon)r}/m^{2}$.
Furthermore, $ r_2 < r <r_1$.

Consider the dual vector $(y^*, \lambda^*)$ defined as 
$$(y^*, \lambda^*) = \alpha_1  (y(\lambda_1), \lambda_1 ) 
+ \alpha_2 (y(\lambda_2), \lambda_2 )$$
where the coefficients $\alpha_{1}$ and $\alpha_{2}$ are defined in Step~\ref{step3} of algorithm $\mathcal{A}$.
Then, $(y^*, \lambda^*)$ forms a feasible solution to the dual $(\mathcal{D}_{\tepsilon}^{\HJ})$
as it is a convex combination of two dual feasible solutions. 

We bound the cost of algorithm~$\mathcal{A}$ by bounding the cost of the
dual $(\mathcal{D}_{\tepsilon}^{\HJ})$. 
\begin{lemma}\label{lem:main-k} 
$\alpha_{1} \sum_{e \in F_{1}} c_{e} +  \alpha_{2} \sum_{e \in F_{2}} c_{e} \le \frac{4}{\tepsilon} \cdot \OPT_{(1-\tepsilon)r}.$
\end{lemma}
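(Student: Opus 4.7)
The plan is to sandwich the left-hand side between twice the value of a carefully constructed dual solution and $\OPT_{(1-\tepsilon)r}$, using weak duality for $(\mathcal{D}_{\tepsilon}^{\HJ})$, and then to control the extra slack term $(1-\tepsilon)r\lambda^*$ that this introduces by exploiting the strict gap $r_1>r$ guaranteed by the binary search.

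First I would apply \lemref{primal-bound} to each of $F_1$ and $F_2$ separately, obtaining $\sum_{e\in F_j}c_e\le 2\sum_S y_S(\lambda_j)$ for $j=1,2$, and take the $\alpha$-weighted combination to get $\alpha_1\sum_{e\in F_1}c_e+\alpha_2\sum_{e\in F_2}c_e\le 2\sum_S y^*_S$. Since $(y(\lambda_1),\lambda_1)$ and $(y(\lambda_2),\lambda_2)$ are each feasible for $(\mathcal{D}_{\tepsilon}^{\HJ})$, so is their convex combination $(y^*,\lambda^*)$, and weak duality immediately gives $\sum_S y^*_S\le \OPT_{(1-\tepsilon)r}+(1-\tepsilon)r\lambda^*$.

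The crux of the argument is bounding $(1-\tepsilon)r\lambda^*$. I would first apply \lemref{penalty} at $\lambda_1$ to get $r_1\lambda_1\le\sum_S y_S(\lambda_1)$, and then combine with a second application of weak duality (to $(y(\lambda_1),\lambda_1)$ alone) to obtain $(r_1-(1-\tepsilon)r)\lambda_1\le \OPT_{(1-\tepsilen)r}$. Because the binary search guarantees $r_1>r$, the coefficient $r_1-(1-\tepsilon)r$ strictly exceeds $\tepsilon r$, so $r\lambda_1\le \OPT_{(1-\tepsilen)r}/\tepsilen$. The termination condition of the while loop, $\lambda_2-\lambda_1\le c_{\min}/m^2\le \OPT_{(1-\tepsilen)r}/m^2$, together with $r\le m$, then transfers the bound to $\lambda^*\le \lambda_2$, yielding $r\lambda^*\le \OPT_{(1-\tepsilen)r}(1/\tepsilen+1/m)$.

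Substituting back into $2\sum_S y^*_S\le 2\OPT_{(1-\tepsilen)r}+2(1-\tepsilen)r\lambda^*$ and using $\tepsilen\le 1$ collapses the constants to the claimed $4/\tepsilen\cdot\OPT_{(1-\tepsilen)r}$. I expect the main obstacle to be the third step: because \HJ~is \emph{not} Lagrangian-multiplier-preserving, there is no direct handle on $r\lambda^*$, so one must apply weak duality a second time and use the \emph{strict} inequality $r_1>r$ (not merely $r_1\ge r$) to extract the factor $\tepsilen$. This is precisely the point at which the resource-augmentation slack between $r$ and $(1-\tepsilen)r$ is converted into the $1/\tepsilen$ factor in the approximation ratio.
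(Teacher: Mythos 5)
Your proof is correct, and it reaches the $4/\tepsilon$ bound by a genuinely different route than the paper, even though both arguments live in the same dual-fitting framework and use the same ingredients (\lemref{penalty}, \lemref{primal-bound}, the termination bound $\lambda_2-\lambda_1\le c_{\min}/m^2\le \OPT_{(1-\tepsilon)r}/m^2$, and $c_{\min}\le\OPT_{(1-\tepsilon)r}$). The paper applies weak duality once, to the convex combination $(y^*,\lambda^*)$, and then uses the identity $r=\alpha_1 r_1+\alpha_2 r_2$ to split the slack term $(1-\tepsilon)r\lambda^*$ so that each $r_j$ is paired with its own $\lambda_j$; after swapping $\lambda^*$ for $\lambda_j$ (at a cost of $m(\lambda_2-\lambda_1)$), it invokes \lemref{penalty} at \emph{both} $\lambda_1$ and $\lambda_2$ through the inequality $\sum_S y_S(\lambda_j)-(1-\tepsilon)r_j\lambda_j\ge \tepsilon\sum_S y_S(\lambda_j)$; the binary-search fact $r_1>r$ is never used quantitatively there, only to ensure $\alpha_1,\alpha_2\in[0,1]$. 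You instead apply weak duality twice: once to $(y^*,\lambda^*)$ to get $\sum_S y^*_S\le \OPT_{(1-\tepsilon)r}+(1-\tepsilon)r\lambda^*$, and once to $(y(\lambda_1),\lambda_1)$, which combined with \lemref{penalty} at $\lambda_1$ alone gives $(r_1-(1-\tepsilon)r)\lambda_1\le \OPT_{(1-\tepsilon)r}$; here the gap $r_1\ge r$ is what supplies the factor $\tepsilon$ via $r_1-(1-\tepsilon)r\ge \tepsilon r$, and the termination bound transfers the estimate from $\lambda_1$ to $\lambda^*\le\lambda_2$. A pleasant side effect of your version is that it never needs the defining relation $\alpha_1 r_1+\alpha_2 r_2=r$ (only $\alpha_1+\alpha_2=1$, $\alpha_1,\alpha_2\ge 0$, and $\lambda_1\le\lambda^*\le\lambda_2$), and it isolates the clean intermediate bound $r\lambda_1\le \OPT_{(1-\tepsilon)r}/\tepsilon$, which makes explicit where the resource-augmentation slack is spent; the paper's version, by contrast, avoids the second duality application and works entirely inside one chain of inequalities. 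Your constants check out: the left-hand side is at most $\bigl(2/\tepsilon+2(1-\tepsilon)/m\bigr)\OPT_{(1-\tepsilon)r}\le (4/\tepsilon)\,\OPT_{(1-\tepsilon)r}$, matching the paper's $\frac{2}{\tepsilon}\cdot\frac{m+1}{m}$ up to the same final rounding.
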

\begin{proof}
The cost of the dual $(\mathcal{D}_{\tepsilon}^{\HJ})$ lower bounds the cost of an optimal algorithm that removes at most $(1-\tepsilon)r$ demands.  
That is,
\begin{align}
&  \OPT_{(1-\tepsilon)r} \ge   \biggl( \sum_{S} y_{S}(\lambda^*) - (1-\tepsilon) r \lambda^* \biggr) \notag \\
&= \alpha_{1}  \biggl( \sum_{S} y_{S}(\lambda_{1}) - (1-\tepsilon) r_{1} \lambda^* \biggr) +
\alpha_{2} \biggl( \sum_{S} y_{S}(\lambda_{2}) - (1-\tepsilon) r_{2} \lambda^* \biggr) \notag\\
&= \alpha_{1}  \biggl( \sum_{S} y_{S}(\lambda_{1}) - (1-\tepsilon) r_{1} \lambda_1 \biggr) -\alpha_1 (1-\tepsilon) r_1 (\lambda^*-\lambda_1) \notag\\
& \qquad 
 +\alpha_{2} \biggl( \sum_{S} y_{S}(\lambda_{2}) - (1-\tepsilon) r_{2} \lambda_2 \biggr) + \alpha_2 (1-\tepsilon) r_2 (\lambda_2-\lambda^*) \notag\\
&\ge \alpha_{1}  \biggl( \sum_{S} y_{S}(\lambda_{1}) - (1-\tepsilon) r_{1} \lambda_1 \biggr)
 +\alpha_{2} \biggl( \sum_{S} y_{S}(\lambda_{2}) - (1-\tepsilon) r_{2} \lambda_2 \biggr) - m (\lambda^*-\lambda_1)\label{ineq: lambda}\\
&\ge \tepsilon \biggl[ \alpha_{1} \frac 1\tepsilon \biggl( \sum_{S} y_{S}(\lambda_{1}) - (1-\tepsilon) r_{1} \lambda_1 \biggr) \notag\\
& \qquad \qquad \qquad +\alpha_{2} \frac 1\tepsilon \biggl( \sum_{S} y_{S}(\lambda_{2}) - (1-\tepsilon) r_{2} \lambda_2 \biggr) \biggr] 
- \frac{\OPT_{(1-\tepsilon)r}}{m} \label{ineq: opt}\\
&= \tepsilon \alpha_{1} \biggl[  \biggl(\frac 1\tepsilon -1\biggr) \biggl( \sum_{S} y_{S}(\lambda_{1}) - r_{1} \lambda_1 \biggr) + 
 \sum_S y_S (\lambda_1)\biggr] \notag\\
& \qquad \qquad \qquad 
   + \tepsilon \alpha_{2} \biggl[ \biggl(\frac 1\tepsilon -1\biggr) \biggl( \sum_{S} y_{S}(\lambda_{2}) - r_{2} \lambda_2 \biggr) +
\sum_S y_S (\lambda_2) \biggr] - \frac{\OPT_{(1-\tepsilon)r}}{m} \notag\\
&\ge \tepsilon \biggl( \alpha_{1} \sum_S y_S (\lambda_1) + \alpha_2 \sum_S y_S (\lambda_2) \biggr) - \frac{\OPT_{(1-\tepsilon)r}}{m} \label{ineq: pos}\\
&\ge \frac \tepsilon 2 \biggl( \alpha_{1} \sum_{e \in F_1} c_e + \alpha_2 \sum_{e \in F_2} c_e  \biggr) - \frac{\OPT_{(1-\tepsilon)r}}{m} \label{ineq:uselemma}
\end{align}
Inequality~(\ref{ineq: lambda}) holds because $\lambda_1 \leq \lambda^* \leq \lambda_2$,  $r_{1} < m$, $0 \leq \alpha_{1}, \alpha_{2} \leq 1$
and $0 < \tepsilon < 1$.
Inequality~(\ref{ineq: opt}) follows from the definition of the penalty costs, that is, $\lambda^{*} - \lambda_{1} \leq \lambda_{2} - \lambda_{1} \leq \OPT_{(1-\tepsilon)r}/m^{2}$.
Inequality~(\ref{ineq: pos}) follows from \lemref{penalty} and the fact that $1/\tepsilon -1 >0$. Finally, Inequality~(\ref{ineq:uselemma}) uses \lemref{primal-bound}.

Rearranging the terms of Inequality~(\ref{ineq:uselemma}) proves~\lemref{main-k}, that is, 
\begin{align*}
\alpha_{1} \sum_{e \in F_{1}} c_{e} +  \alpha_{2} \sum_{e \in F_{2}} c_{e} \le   \frac 2 \tepsilon \cdot \frac{m+1}{m} \cdot 
 \OPT_{(1-\tepsilon)r} \le  \frac 4 \tepsilon  \cdot \OPT_{(1-\tepsilon)r}.
\end{align*}
\qed
\end{proof}

We are now ready to prove the main theorem. 

\begin{proofof}{\thmref{k-forest}}
We analyze algorithm $\mathcal{A}$. \lemref{step2-analysis} is sufficient for the case that $\mathcal{A}$
outputs the solution in Step~\ref{step1}. Now suppose that $\mathcal{A}$ outputs
the solution in Step~\ref{step3}.

Note that $(1-\tepsilon)r \geq (1-\epsilon)(m-k) \geq \lfloor (1-\epsilon)(m - k) \rfloor$,
therefore, we have,
\[\OPT_{(1-\tepsilon)r} \leq \OPT_{\lfloor (1-\epsilon)(m - k) \rfloor}.\]
We consider two cases based on the value of $\alpha_{2}$.\\
\pparagraph{Case 1:  $\alpha_2 \ge \tepsilon$} $\mathcal{A}$ returns $F_2$ which is a feasible solution since 
the number of demands removed is $r_2 \le r$. We bound the cost of solution $F_2$ using~\lemref{main-k}:
\begin{align*}	
 \sum_{e \in F_2} c_e &\le \frac 1\tepsilon \alpha_2   \sum_{e \in F_2} c_e \le \frac 1\tepsilon \biggl( \alpha_1  \sum_{e \in F_1} c_e + \alpha_2  \sum_{e \in F_2} c_e \biggr) \\
&\le  \frac{4}{\tepsilon^2}\cdot \OPT_{(1-\tepsilon)r}
\le \frac{4}{\tepsilon^2}\cdot \OPT_{\lfloor (1-\epsilon)(m - k) \rfloor}.
\end{align*}
\pparagraph{Case 2: $\alpha_2 < \tepsilon$} $\mathcal{A}$ outputs $F_{1}$ as the solution. Since
$\alpha_1 + \alpha_2 =1$ by definition, we have $\alpha_1 > 1-\tepsilon$. 
Using equation~(\ref{eq:alpha-forest}), we have:
\begin{align*}
r - r_2 \ge (1-\tepsilon) (r_1 - r_2) 
\Rightarrow  r - \tepsilon r_2 \ge (1-\tepsilon) r_1 
\Rightarrow r_1 \le \frac{1}{(1-\tepsilon)} \cdot r = (m - k)
\end{align*}
where the last equality uses $r = (1-\tepsilon)(m-k)$. Thus, 
$F_{1}$ is a feasible solution.

We bound the cost of solution $F_1$, applying~\lemref{main-k} again:
\begin{align*}	
 \sum_{e \in F_1} c_e &\le \frac{1}{1-\tepsilon} \alpha_1 \sum_{e \in F_1} c_e 
 	\le  \frac{1}{1-\tepsilon} \biggl( \alpha_1  \sum_{e \in F_1} c_e + \alpha_2  \sum_{e \in F_2} c_e \biggr) \\
	&\le  \frac{4}{\tepsilon^{2}}\cdot \OPT_{(1-\tepsilon)r}
	\le \frac{4}{\tepsilon^{2}}\cdot \OPT_{\lfloor (1-\epsilon)(m - k) \rfloor}
\end{align*}
where the third inequality holds since $(1 - \tepsilon) \geq 1/2 \geq \tepsilon$. 

The two cases together prove the approximation and augmentation factors of~$\mathcal{A}$~in~\thmref{k-forest} (recall that $\tepsilon = \epsilon/2$).

$\mathcal{A}$ makes $O\bigl( \log( \frac{1}{\epsilon} m^{2} \frac{\sum_{e} c_{e}}{c_{\min}}) \bigr)$ calls to the polynomial-time \HJ~algorithm.
Thus, $\mathcal{A}$'s running time is polynomial in the size of the input and $\log 1/\epsilon$.  
\qed
\end{proofof}

\section{Conclusion}\label{sec:conclusion}
\vspace*{-2pt}
The model of resource augmentation has been widely-used and has successfully 
provided theoretical evidence for several heuristics, especially in the case of online scheduling problems. 
Surprisingly, for offline algorithms, not many scalable approximation algorithms have been designed,
despite the need of effective algorithms for hard problems.

In this paper, we initiate the study of hard (to approximate) problems in the resource-augmentation model. 
We show that the $k$-forest problem can be approximated up to a constant factor using augmentation.
It is an interesting direction to design algorithms in the resource augmentation model 
for other hard problems which currently admit no meaningful approximation guarantees. 
  
\subsection*{Acknowledgments}
We thank Samuel McCauley for giving us his valuable feedback.

\bibliographystyle{splncsnat}
\bibliography{ckm}
\end{document}